\documentclass[11pt]{article}

\newif\ifanon
\anonfalse

\usepackage[american]{babel}
\usepackage[T1]{fontenc}
\usepackage[utf8]{inputenc}
\IfFileExists{lmodern.sty}
  {\usepackage{lmodern}\usepackage{microtype}}
  {\usepackage[expansion=false]{microtype}}
\usepackage[margin=1in]{geometry}
\usepackage{setspace}
\usepackage{lineno}
\usepackage{amsmath,amssymb,amsthm,mathtools}
\usepackage{booktabs,tabularx,array}
\usepackage{graphicx}
\usepackage[font=small,labelfont=bf]{caption}
\usepackage{enumitem}
\usepackage{xcolor}
\usepackage{xurl}
\usepackage[round,authoryear]{natbib}
\usepackage[colorlinks=true,allcolors=blue!55!black]{hyperref}

\bibpunct{(}{)}{;}{a}{}{,}

\modulolinenumbers[5]
\setlist{nosep,leftmargin=*}
\newcolumntype{Y}{>{\raggedright\arraybackslash}X}
\newtheorem{proposition}{Proposition}

\newcommand{\Ber}{\operatorname{Ber}}
\newcommand{\KL}{D_{\mathrm{KL}}}
\newcommand{\TV}{d_{\mathrm{TV}}}
\newcommand{\ind}{\mathbf{1}}

\newcommand{\orcid}[1]{\href{https://orcid.org/#1}{\nolinkurl{https://orcid.org/#1}}}

\title{Watermarking: from Impossibility to Auditable Compliance}
\ifanon
  \author{}
\else
  \author{Fernando Delbianco\textsuperscript{1,2}\and
           Fernando Tohm\'e\textsuperscript{1.2}\and
           Hugo Acciarri\textsuperscript{3}}
\fi
\date{}

\begin{document}
\maketitle
\ifanon\else
\begin{center}
\small
\textsuperscript{1}\,Departamento de Econom\'ia, Universidad Nacional del Sur (UNS), Bah\'ia Blanca, Argentina\\
\textsuperscript{2}\,Instituto de Matem\'atica de Bah\'ia Blanca (INMABB), CONICET--UNS, Bah\'ia Blanca, Argentina\\
\textsuperscript{3}\,Departamento de Derecho, Universidad Nacional del Sur (UNS), Bah\'ia Blanca, Argentina\\[0.8em]
\textbf{Corresponding author:} Fernando Delbianco, \href{mailto:fernando.delbianco@uns.edu.ar}{fernando.delbianco@uns.edu.ar}\\[0.5em]
\textbf{E-mail:} F.\ Delbianco: fernando.delbianco@uns.edu.ar\\
F.\ Tohm\'e: ftohme@criba.edu.ar\\
H.\ Acciarri: hacciarri@criba.edu.ar\\[0.5em]
\textbf{ORCID:} F.\ Delbianco: \orcid{0000-0002-1560-2587}\\
F.\ Tohm\'e: \orcid{0000-0003-2988-4519}\\
H.\ Acciarri: \orcid{0009-0005-8293-7087}
\end{center}
\fi
%\linenumbers

\begin{abstract}
Article 50 (2) of the EU Artificial Intelligence Act requires providers of generative systems to make synthetic outputs machine-readable and detectable, while qualifying the effectiveness, interoperability, robustness, and reliability by technical feasibility, cost, content-specific limits, and the state of the art. For free-form text, one important implementation route is the implementation of a generative watermarking procedure, which poses a compliance problem that is hard to address. Strong watermarking is impossible against adaptive removal, while ordinary edits attenuate statistical evidence, and unmarked human text may overlap distributionally with machine output. This article develops an auditable alternative. First, it defines a description-length robustness profile. A finite-sample bound shows that detectable bias decays and that the required sample size grows with the inverse square of the decay rate. This replaces an unidentified Shannon-entropy constant with {\em collision entropy}. Second, it constructs label-conditional conformal prediction sets with separate false-attribution and false-exclusion levels, reporting ``watermark supported,'' ``not supported,'' or ``inconclusive''. Coverage is obtained as a finite-sample result and is class-conditional under exchangeability. A small reproducible simulation of a tournament watermark confirms both claims and shows that the surviving-token rule overstates the tolerable edit rate roughly twofold. The resulting premarket certificate, signed detector report, and postmarket recalibration protocol operationalize the Commission's 2026 Code of Practice without claiming universal robustness.
\end{abstract}

\noindent\textbf{Keywords:} text watermarking, EU AI Act, minimum description length, conformal prediction, synthetic-content provenance, algorithmic accountability

\section{Introduction}

Text watermarking is one of the few techniques that can travel with free-form text after metadata is stripped, yet the same editability that makes text useful also makes a statistical watermark fragile. The strongest general guarantee, that no efficient adversary can remove a nontrivial mark while preserving output quality, is unavailable under broad, oracle-aided threat models \citep{zhang2024}. Even without an active adversary, the best possible classifier cannot reliably separate two source distributions that are close in total variation \citep{sadasivan2023}. At the same time, some evidence shows that generative watermarking can preserve measured quality and provide useful detection on sufficiently long text \citep{dathathri2024}. The practical question is consequently not whether a watermark is invulnerable, but which evidence survives which transformations, at what error rates, and under which documented conditions. This paper addresses this question under the emerging European regulatory regime.

This question became legally concrete on August~2, 2026. Article~50(2) of Regulation (EU) 2024/1689 requires providers of systems that generate synthetic audio, image, video, or text to ensure that outputs are machine-readably marked and detectable as artificially generated or manipulated \citep{euaiact2024}. The provision is not an absolute command. It requires technical solutions to be effective, interoperable, robust, and reliable ``as far as'' technically feasible, taking account of content limitations, implementation cost, and the generally acknowledged state of the art. The obligation does not apply in the cases that the AI system assists standard editing, without substantially altering the input data provided by the deployer or its semantics, or alternatively, when the AI system is authorized by law to be used to detect, prevent, investigate, or prosecute criminal offences. Regulation (EU) 2026/1744 allows for a transition until December~2, 2026 for systems placed on the market before August~2, 2026 \citep{eudigital2026}. The Commission's final Code of Practice and Article~50 Guidelines now give the provision an operational setting \citep{eucode2026,euguidelines2026}.

Article 50(2) requires providers to ensure that their marking and detection solutions are effective, interoperable, robust, and reliable insofar as technically feasible, making performance under relevant uses and transformations part of the compliance inquiry. The Code operationalizes that requirement for signatories through testing and documentation across expected contexts and transformations and, for free-form text, expressly recognizes that detection may sometimes yield comparatively less reliable, robust, or potentially misleading or low-confidence results. In such cases, the evidential limitations of the detector should be made explicit, and the result should not be presented as supporting a categorical conclusion where the available evidence does not justify one.. A single area under the receiver operating characteristic curve (AUROC) cannot answer the question. It averages across thresholds, obscuring a text's length and domain, and says little about a document that has moved outside the evaluation distribution.

This article proposes a joint technical and governance framework with five contributions:

\begin{enumerate}[label=(\roman*)]
    \item It defines \emph{watermark evidence bits}, interpreted as a minimum-description-length (MDL) device that yields auditable robustness.
    \item It derives a conservative bound for surviving edits, diluting the observed watermark. 
    \item It supplies a two-sided conformal procedure. Separate error levels control false attribution of the target watermark and false exclusion of genuinely marked text. Ambiguous and anomalous cases are reported as inconclusive.
    \item It reports a simulation of a watermark. It recovers the survival law and the corrected frontier to within Monte Carlo error. The surviving-token approximation overstates the tolerable substitution rate by a factor of nearly two.
    \item It maps these quantities to a certificate, a detector report, and a monitoring loop aligned with Article 50(2)'s criteria of effectiveness, reliability, robustness, and interoperability.
\end{enumerate}

The framework delivers less than a universal detector. The outcome is \emph{text carrying the watermark of a specified provider, model or sampler version, detector version, and key epoch}. Its complement includes human text, output from other models, provider output generated without the target mark, and text from which the mark has been removed. Accordingly, a negative result does not establish human authorship, and a positive result does not establish who submitted, edited, or endorsed the text. Those distinctions are essential since the output may affect a person's rights or reputation.

\section{Regulatory and Technical Background}

Three features of the current legal setting matter for a technical design. First, Article~50(2) states a provider obligation at the system-output layer. It is distinct from the disclosure duties imposed on deployers in other paragraphs. Second, its four performance attributes are expressly contextual and technologically qualified. A technical certificate may therefore delimit supported content types, lengths, languages, transformations, and operating points. Third, compliance is evaluated within a developing ecosystem of standards, codes, and supervisory practice rather than by a single statutory threshold.

The regulatory setting was subsequently supplemented by three separate instruments with different legal effects. 
The Code was the first of these three instruments to reach its final form on 10 June 2026, although the Commission’s work on the Article 50 Guidelines had begun earlier, with a public draft released on 8 May 2026. The Code is a voluntary compliance instrument rather than mandatory legislation. It translates the general requirements of Article 50(2), together with the obligations in Article 50(4) and (5), into more detailed operational commitments concerning matters such as marking techniques, testing, documentation, detector access, and interoperability. On 8 July 2026, the Commission assessed the Code as adequately covering the obligations in Article 50(2), (4), and (5), an assessment followed by the AI Board on 9 July. Article 50(7), as amended by Regulation (EU) 2026/1744, expressly provides for the Commission to assess whether adherence to the Code is adequate to ensure compliance with the obligations laid down in paragraphs 2 and 4, while the Code itself continues to address the horizontal requirements of paragraph 5. Adherence does not constitute conclusive evidence of compliance, nor does it give rise to a presumption of conformity of the kind that Article 40 attaches, within its own scope, to compliance with harmonized standards. The Code nevertheless occupies a privileged practical position. The Commission Guidelines describe it as “the only Union-wide recognised practical framework for demonstrating compliance” with the obligations in Article 50(2), (4), and (5). Non-signatories may demonstrate compliance through other adequate means, but those alternatives are not equivalent in practical terms: providers must independently explain and document how their measures satisfy Article 50, may have to carry out a gap analysis comparing those measures with the Code, and may face additional requests for information or access from competent authorities.
A second development was Regulation (EU) 2026/1744 (the AI Omnibus), adopted by the European Parliament and the Council on 8 July 2026. This instrument is binding legislation and amended the AI Act itself. For present purposes, its principal effect on our subject was to introduce a limited transitional rule for Article 50(2). Providers of systems generating synthetic audio, image, video, or text that had already been placed on the market before 2 August 2026 must take the necessary steps to comply with Article 50(2) by 2 December 2026. The four-month transition concerns only this obligation and this category of pre-existing systems; the other transparency obligations in Article 50 remain subject to their ordinary application dates (European Parliament and Council 2026).
The third instrument is the European Commission’s Guidelines on the implementation of Article 50, adopted and published on 20 July 2026 under Article 96(1)(d) of the AI Act (European Commission 2026b). The Guidelines are an official interpretive instrument. They explain how the Commission understands the scope and application of the binding obligations contained in Article 50, while leaving the Regulation itself unchanged. For Article 50(2), they clarify, among other matters, the treatment of source code, outputs intended exclusively for machine-to-machine communication, certain outputs used only within closed-loop industrial or product-development environments, standard editing, and narrowly defined business-to-business or industrial contexts. They therefore help identify when the statutory obligation applies without creating an independent marking or detection regime.
For the technical framework developed here, the resulting structure is consequential. Article 50(2) supplies the binding legal standard: qualifying outputs must be machine-readably marked and detectable through solutions that satisfy the Regulation’s contextual requirements of effectiveness, interoperability, robustness, and reliability. The Code supplies a more detailed, voluntary implementation route for signatories. The Guidelines are intended to clarify the scope and application of the underlying legal obligation. Regulation (EU) 2026/1744 modifies the binding framework itself, including the limited transition for systems already on the market. These instruments perform different functions and should therefore remain analytically separate.
This structure also leaves room for uncertainty at the level of individual detection results. Article 50(2) requires a technically adequate marking and detection system under the conditions specified by the Regulation. Conversely, it does not establish that every individual output must permit categorical attribution.

The final Code of Practice recognizes that no single marking technique generally satisfies all four Article~50(2) attributes across modalities. It adopts a multilayer approach, combining digitally signed metadata with content that can contain imperceptible watermarking. Free-form text cannot reliably retain container metadata, so the Code treats it as one of two narrow exceptions in which a single marking layer suffices. For free-form text, the Code requires signatories to apply an imperceptible watermark to texts longer than 200 tokens, while recognizing that reliability at such lengths may be lower than for very long texts. At the detection stage, it allows access to free-form-text watermark detectors to be restricted where comparatively lower reliability or robustness may produce misleading or low-confidence results, provided that useful access remains available, in particular to verified expert users with a legitimate need, and that the restriction lasts only until sufficiently reliable and robust detection mechanisms become established. The Code also calls for:

\begin{itemize}
    \item reliability testing across content length, size, diversity, semantics, and expected use contexts,
    \item robustness testing against ordinary processing, including lexical substitution, character insertion or deletion, paraphrasing, and translation cycles, and against adversarial attacks,
    \item interoperable access to detection mechanisms, with watermark-detection interoperability required by February~2, 2027, and
    \item downloadable, digitally signed detection results carrying identifying information.
\end{itemize}

These requirements create a need for reproducible claims, bounded scope, auditable error rates, and corrective action. The Commission Guidelines further explain that adherence to the voluntary Code can demonstrate compliance, while a provider using an alternative route must show that its means are adequate \citep{euguidelines2026}. The proposal below involves a measurement layer that can support either route.

The Code’s 200-token benchmark deserves particular attention because it links a specific implementation commitment for signatories to a statistical instrument whose evidential power varies with text length. Section~\ref{sec:demo} shows that, in a controlled simulation, text in the vicinity of that benchmark may still be too short to support a categorical conclusion once ordinary editing is introduced. The point at which the Code requires signatories to apply watermarking should therefore be distinguished from the point at which the resulting evidence becomes sufficiently informative for a categorical attribution. The certificate proposed in this article is designed to make that distinction explicit.

AI-text classifiers detect distributional features associated with machine output. Generative watermarks, instead, alter the sampling process using a secret or provider-controlled key and test for the induced statistical pattern \citep{kirchenbauer2023}. Cryptographic and coding-based work has produced distortion-free, undetectable, and provably robust constructions under different assumptions \citep{kuditipudi2024,christ2024,zhao2024,golowich2024,qu2025}. These are not interchangeable guarantees. ``Undetectable'' may mean computational indistinguishability to an observer without the key, whereas ``detectable'' means recoverable evidence for the authorized detector. ``Robust'' is meaningful only relative to the properties of the generative process.

More precisely, let $Y=1$ denote the fact that the input contains the target provider's active watermark under a declared configuration. Let $Y=0$ denote its complement. The hypotheses are
\begin{equation}
H_1:Y=1 \qquad\text{versus}\qquad H_0:Y=0.
\end{equation}
Neither hypothesis is equivalent to ``AI authored'' versus ``human authored.'' Mixed authorship, retrieval, quotations, provider-side exemptions, and post-generation edits all break that equivalence. 

\citet{zhang2024} show that strong watermarking is impossible in general against an efficient adversary equipped with suitable quality and perturbation oracles. But their result does not claim that every deployed watermark is useless. It just shows that a nontrivial, quality-preserving, universally removal-resistant guarantee cannot be assumed under that threat model. Later work sharpens robustness boundaries in coding abstractions and constructs guarantees for specified edit channels \citep{francati2025,golowich2024,qu2025,zhai2026}. 

\citet{sadasivan2023} establish a different, statistical boundary. For human-text distribution $P_H$, machine-text distribution $P_M$, and any detector $D$, and total variation distances $d_{TV}$,\footnote{A ROC curve illustrates the performance of a binary classifier model at varying threshold values. AUROC is the area under the ROC curve.}
\begin{equation}
\operatorname{AUROC}(D)
\leq \frac{1}{2}+\TV(P_M,P_H)-\frac{1}{2}\TV(P_M,P_H)^2.
\label{eq:tv-auroc}
\end{equation}
As the two distributions approach one another, the upper bound approaches a purely random valuation. A watermark can separate its marked distribution from the unmarked distribution, but editing can close that gap, and quality constraints limit how much separation a provider can introduce. Equation~\eqref{eq:tv-auroc} is therefore a reason to report the tested distributions and transformations.

The human costs of misclassification reinforce this requirement. Generic AI detectors have exhibited severe error disparities for nonnative English writing \citep{liang2023}. A provider-specific watermark detector avoids some feature-based biases, but length, language, topic, etc., can still change its score distribution.

SynthID-Text demonstrates that a generative watermark can operate at production scale. Its evaluation included approximately 20 million live Gemini responses and found no statistically significant degradation in the reported feedback measures. It also documented that longer and higher-entropy text carries more evidence and that editing or paraphrasing weakens detection \citep{dathathri2024}. The system uses pseudorandom scores and tournament sampling, generally with a four-token seed window and multiple tournament layers. It also offers selective prediction. 

Confidence intervals for an average benchmark metric do not become coverage guarantees for a later item. For that, conformal calibration is needed. This has already been applied in the watermark literature. \citet{xie2026} use conformal $p$-values to control false positives for adaptive AI-use policies in classrooms. The present contribution differs in purpose and construction. It addresses provider-side Article~50 auditing, ties calibration cells to an explicit robustness profile, and uses two label-conditional $p$-values so that both false attribution and false exclusion can be assigned separate tolerances. 

\section{A Description-Length Robustness Profile}

Let $\mathcal V$ denote the model's token vocabulary, a finite set of tokens its tokenizer can emit, fixed by the declared model and tokenizer version. Let $X_t \in \mathcal V$ be the token generated at position $t$ and $X_{<t}=(X_1,\ldots,X_{t-1})$ the vector of tokens preceding it. Consider an autoregressive model with token distribution
\begin{equation}
p_t(v)=\Pr(X_t=v\mid X_{<t}), \qquad v\in\mathcal V.
\end{equation}
A context-seeded watermark derives a pseudorandom seed\footnote{A deterministic map $
\operatorname{Hash}:K \times \mathcal{V}^h \rightarrow \mathcal{R}$ from a key and an $h$-token window to a seed space $\mathcal{R}$, whose outputs are computationally indistinguishable from independent uniform draws on 
$\mathcal{R}$ for any party not holding the key.}
\begin{equation}
r_t=\operatorname{Hash}(k,X_{t-h},\ldots,X_{t-1})
\end{equation}
from a secret key $k$ and a window of $h$ preceding tokens. A detector computes bounded contributions $Z_j\in[0,1]$ and aggregates them into a score $S$. Conditioning on the hypotheses, under a Bernoulli implementation, the stylized null and marked models are
\begin{equation}
Z_j\mid Y=0\sim\Ber(1/2),
\qquad
Z_j\mid Y=1\sim\Ber(1/2+\varepsilon),
\label{eq:ber-model}
\end{equation}
where $\varepsilon>0$ is the average marked bias. 

In a  single-layer tournament under the distribution $\Ber(1/2)$, let
\begin{equation}
C_t=\sum_{v\in\mathcal V}p_t(v)^2
\qquad\text{and}\qquad
H_2(p_t)=-\log C_t
\end{equation}
be the {\em collision probability} and the order-two R\'enyi entropy, respectively. The SynthID-Text analysis gives, after expectation over the random seed,
\begin{equation}
\Pr(Z_t=1\mid Y=1)=\frac{1}{2}+\frac{1-C_t}{4}.
\label{eq:collision-bias}
\end{equation}
The first-layer signal is $\varepsilon_t=(1-C_t)/4$, vanishing for a deterministic next-token distribution \citep{dathathri2024}. Deeper tournament layers alter the token distribution and its collision probability, so their contributions cannot simply be multiplied by the number of layers. Providers should report the actual covariance structure. If $n$ raw contributions have autocorrelations $r_q$, a conventional planning diagnostic is
\begin{equation}
n_{\mathrm{eff}}\approx
\frac{n}{1+2\sum_{q=1}^{q_{\max}}r_q},
\label{eq:neff}
\end{equation}
with a prespecified truncation. Substituting $n_{\mathrm{eff}}$ into an asymptotic formula is not a guarantee in itself but provides a useful heuristic. In Section~\ref{sec:demo}, Equation~\eqref{eq:neff} is applied on a simulated single-layer scheme, where it correctly returns $n_{\mathrm{eff}}\approx n$.

Only one edit model admits an exact survival calculation.It makes four assumptions: (i) Each token is substituted independently with probability $\delta$, (ii) the editor does not know the key, (iii) tokenization and length stay fixed. Finally, (iv) a contribution whose own token, or whose $h$-token seed context, has been touched loses its bias and behaves like the null in expectation.

Write $E_t=1$ if position $t$ was substituted and $E_t=0$ otherwise. Contribution $t$ survives only if none of the $h+1$ positions on which it depends has been touched, so the survival indicator is
\begin{equation}
U_t=\prod_{j=0}^{h}(1-E_{t-j}),
\end{equation}
with
\begin{equation}
\rho(\delta,h)=\mathbb E[U_t]=(1-\delta)^{h+1}.
\label{eq:survival}
\end{equation}
If the detector cannot identify which windows survived, the observed marked mean becomes diluted:
\begin{equation}
\mathbb E[Z_t\mid Y=1,\delta]
=\frac{1}{2}+\varepsilon\rho(\delta,h).
\label{eq:dilution}
\end{equation}

 Suppose a detector could discard every corrupted position. It would see about $n\rho$ contributions, each still biased by $\varepsilon$ so that the separation would depend on $n\varepsilon^2\rho$. A real detector would not be able to tell which windows survived. It only sees about $n$ contributions, each biased by  $\varepsilon\rho$, so the separation will depend on $n\varepsilon^2\rho^2$. The extra factor of $\rho$ is the whole difference between the two accounts.

\begin{proposition}
\label{prop:frontier}
Let $\bar Z$ be the mean of $n$ independent contributions taking values in $[0,1]$. Under $H_0$ each contribution has mean $1/2$. According to Equations~\ref{eq:survival} and \ref{eq:dilution}, each has mean $1/2+\varepsilon(1-\delta)^{h+1}$.  $H_0$ gets rejected when
\begin{equation}
\bar Z>\frac{1}{2}+\sqrt{\frac{\log(1/\alpha)}{2n}}.
\label{eq:hoeffding-test}
\end{equation}
Then the false-positive probability is at most $\alpha$. The false-negative probability is at most $\beta$ whenever
\begin{equation}
n\geq
\frac{\left(\sqrt{\log(1/\alpha)}+\sqrt{\log(1/\beta)}\right)^2}
{2\varepsilon^2(1-\delta)^{2(h+1)}}.
\label{eq:sample-bound}
\end{equation}
Equivalently, a sufficient maximum substitution rate is
\begin{equation}
\delta_{\max}^{\mathrm{H}}
=\left[
1-
\left\{
\frac{\sqrt{\log(1/\alpha)}+\sqrt{\log(1/\beta)}}
{\varepsilon\sqrt{2n}}
\right\}^{\!1/(h+1)}
\right]_+,
\label{eq:delta-hoeffding}
\end{equation}
where $[u]_+=\max(u,0)$.
\end{proposition}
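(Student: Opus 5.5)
The plan is to apply Hoeffding's inequality twice, once under each hypothesis, and then invert the resulting sample-size condition. Write $t_\alpha=\sqrt{\log(1/\alpha)/(2n)}$ for the rejection margin in Equation~\eqref{eq:hoeffding-test}, and $\mu_1=\tfrac12+\varepsilon\rho$ with $\rho=(1-\delta)^{h+1}$ for the marked mean given by Equations~\eqref{eq:survival} and~\eqref{eq:dilution}.

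\emph{False-positive bound.} The contributions are independent and take values in $[0,1]$. Under $H_0$ each has mean $\tfrac12$, so the one-sided Hoeffding inequality gives $\Pr(\bar Z-\tfrac12>t)\le \exp(-2nt^2)$. Substituting $t=t_\alpha$ gives exactly $\alpha$. No distributional assumption beyond boundedness and independence is needed, so the Bernoulli model~\eqref{eq:ber-model} is not required here.

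\emph{False-negative bound.} A miss occurs when $\bar Z\le \tfrac12+t_\alpha$, which is the event $\bar Z-\mu_1\le -(\varepsilon\rho-t_\alpha)$. Suppose $\varepsilon\rho-t_\alpha\ge t_\beta:=\sqrt{\log(1/\beta)/(2n)}$. Then the lower-tail Hoeffding inequality bounds the miss probability by $\exp(-2nt_\beta^2)=\beta$. So the sufficient condition is
\[
\varepsilon\rho\ \ge\ \frac{\sqrt{\log(1/\alpha)}+\sqrt{\log(1/\beta)}}{\sqrt{2n}}.
\]
Squaring both sides, which are nonnegative, and solving for $n$ yields Equation~\eqref{eq:sample-bound}. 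The implicit hypothesis $\varepsilon\rho>t_\alpha$, needed for the deviation to be positive, is implied by this condition.

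\emph{Frontier form.} Fix $n$ and solve the same inequality for $\delta$. The condition reads $(1-\delta)^{h+1}\ge \kappa:=\bigl(\sqrt{\log(1/\alpha)}+\sqrt{\log(1/\beta)}\bigr)/(\varepsilon\sqrt{2n})$. The map $u\mapsto u^{1/(h+1)}$ is increasing, so this is equivalent to $\delta\le 1-\kappa^{1/(h+1)}$. Taking the positive part covers the case $\kappa>1$: there no $\delta\in[0,1]$ suffices, and the frontier is reported as $0$. Strictly speaking, in that case even $\delta=0$ fails the sufficient condition, and I would note this caveat in the interpretation. Since the condition is monotone in $\delta$, every $\delta\le\delta^{\mathrm H}_{\max}$ inherits the $\beta$ guarantee.

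\emph{Main obstacle.} The delicate point is justifying the use of Hoeffding under $H_1$ with the diluted mean. The survival indicators $U_t$ share the edit variables $E_{t-j}$ across overlapping windows, so the contributions are not independent under the edit channel. The proposition sidesteps this by assuming $n$ independent contributions with the stated mean, and I would state clearly that the bound holds conditionally on that assumption. In the proof I would remark that relaxing it would require Hoeffding for $(h+1)$-dependent sequences, for example by splitting the sum into $h+1$ interleaved independent subsequences, at the cost of a factor $h+1$ in $n$. That refinement is not needed for the statement as written.
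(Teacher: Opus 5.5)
Your proof is correct and follows the paper's own argument: Hoeffding under $H_0$ at the threshold $\tfrac12+\sqrt{\log(1/\alpha)/(2n)}$, then the lower-tail Hoeffding bound at the diluted marked mean $\tfrac12+\varepsilon(1-\delta)^{h+1}$, then solving the resulting condition for $n$ and for $\delta$ (your requirement $\varepsilon\rho-t_\alpha\ge t_\beta$ is just the paper's $\exp[-2n\{\Delta-t_\alpha\}^2]\le\beta$ rewritten). Your two caveats are also sound and go beyond what the paper notes: when $\kappa>1$ the positive part returns $\delta^{\mathrm H}_{\max}=0$ even though no rate, not even $\delta=0$, meets the sufficient condition; and overlapping windows make the $U_t$, and hence the $Z_t$, positively correlated, so independence in the statement is an assumption rather than a consequence of the edit model.
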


\begin{proof}
Hoeffding's inequality applied to Equation~\eqref{eq:hoeffding-test} gives
$\Pr_{0}(\text{reject})\leq\alpha$. Under the edited marked hypothesis (Equations~\ref{eq:survival} and \ref{eq:dilution}), set
$\Delta=\varepsilon(1-\delta)^{h+1}$. The probability that $\bar Z$ falls below the threshold is at most
\begin{equation*}
\exp\!\left[-2n\left\{\Delta-
\sqrt{\frac{\log(1/\alpha)}{2n}}\right\}^2\right],
\end{equation*}
provided the expression in braces is nonnegative. Requiring this upper bound to be no larger than $\beta$ gives Equation~\eqref{eq:sample-bound}, and a rearrangement gives Equation~\eqref{eq:delta-hoeffding}.
\end{proof}

Hoeffding's inequality\footnote{Which yields an upper bound on the probability that the sum of bounded independent random variables deviates from its expected value by more than a certain amount.} holds for any bounded contribution. Under the Bernoulli model of Equation~\eqref{eq:ber-model} a normal approximation gives the tighter planning values
\begin{equation}
n\gtrsim
\frac{(z_{1-\alpha}+z_{1-\beta})^2}
{4\varepsilon^2(1-\delta)^{2(h+1)}}
\label{eq:normal-n}
\end{equation}
and
\begin{equation}
\delta_{\max}^{\mathrm{N}}
=\left[
1-
\left\{
\frac{z_{1-\alpha}+z_{1-\beta}}
{2\varepsilon\sqrt n}
\right\}^{\!1/(h+1)}
\right]_+.
\label{eq:normal-delta}
\end{equation}
Here $z_{1-\alpha}$ is the $(1-\alpha)$ quantile of the standard normal distribution.

Equation~\eqref{eq:normal-delta} shows the impact of the correction. A surviving-position derivation assumes that $n\rho$ contributions remain and that each keeps the full bias $\varepsilon$. That gives $1-\delta_{\max}\propto n^{-1/(h+1)}$. Because an unaided detector cannot find the surviving windows, the correct exponent is half as large: $1-\delta_{\max}\propto n^{-1/[2(h+1)]}$. The two rules disagree by a wide margin, and Section~\ref{sec:demo} measures the gap.

As an illustration only, Equation~\eqref{eq:normal-delta} with $\alpha=.01$, $\beta=.05$, $h=4$, and $\varepsilon=.05$ gives $\delta_{\max}^{\mathrm N}\approx.02$ at $n=2{,}000$ and $\approx.17$ at $n=10{,}000$. The difference of performance with SynthID stems from the assumptions (i)-(iv) of the edit model.

The {\em Minimal Description Length}(MDL) criterion selects the model that gives the shortest joint description of model and data \citep{rissanen1978,grunwald2007}. Let $P_1$ and $P_0$ be two fixed score models, with densities $f_1$ and $f_0$, describing the detector score under the marked and null hypotheses. For an observed score $s$, the codelength advantage of the marked model is the base-two log-likelihood ratio
\begin{equation}
b(s)=\log_2\frac{f_1(s)}{f_0(s)}.
\label{eq:item-bits}
\end{equation}

Its expectation under $P_1$ is
\begin{equation}
B(P_1,P_0)=\frac{\KL(P_1\Vert P_0)}{\log 2}
\quad\text{bits}.
\label{eq:evidence-bits}
\end{equation}
This provides an evaluation of the comparison of the two models. A detector report can disclose Equation~\eqref{eq:item-bits} for a single item, because it depends only on the observed score.

Under Equations~\eqref{eq:ber-model} and \eqref{eq:dilution}, the expected codelength advantage over $n$ independent contributions is
\begin{equation}
B(\delta)=\frac{n}{\log 2}
\KL\!\left(
\Ber\!\left\{\tfrac12+\varepsilon(1-\delta)^{h+1}\right\}
\middle\Vert \Ber(\tfrac12)
\right).
\label{eq:ber-bits}
\end{equation}
For small $\varepsilon(1-\delta)^{h+1}$, Equation~\eqref{eq:ber-bits} reduces to
\begin{equation}
B(\delta)\approx
\frac{2n\varepsilon^2}{\log 2}(1-\delta)^{2(h+1)}.
\label{eq:small-signal-bits}
\end{equation}
Equation~\eqref{eq:small-signal-bits} explains that, while edits reduce the mean signal linearly in $\rho$, the information separating two nearby Bernoulli models falls quadratically.

$B(\delta)$ is a \emph{description-length robustness curve}. That is, the loss of provider-specific statistical evidence under a declared transformation.

Let $S$ denote the aggregate detector score treated as a random variable, and let $F_{a,\theta}$ be a transformation family with operation type $a$ and severity $\theta$. The general profile is
\begin{equation}
B_{a,\theta,r}
=\frac{1}{\log 2}
\KL\!\left(
P_{S\mid Y=1,F_{a,\theta},R=r}
\middle\Vert
P_{S\mid Y=0,R=r}
\right),
\label{eq:generic-profile}
\end{equation}
where the random variable $R$ indexes calibration strata and $R=r$ fixes one of them. Providers must estimate Equation~\eqref{eq:generic-profile} on held-out data, report its uncertainty, and retain the underlying score distributions.

No single number captures every text change that matters legally or technically. Normalized token edit distance catches substitutions, insertions, deletions, and seed desynchronization, but it scores a meaning-preserving paraphrase as a large change. Embedding similarity tracks meaning better, but it can miss negation, changed named entities, and task-critical facts, and it depends on the embedding model chosen. The audit should therefore keep both:
\begin{align}
d_{\mathrm{tok}}(x,x')
&=\frac{\operatorname{Lev}_{\mathrm{tok}}(x,x')}
{\max\{|x|,|x'|\}},\\
d_{\mathrm{sem}}(x,x')
&=1-\operatorname{sim}_{E}(x,x'),
\end{align}
Here $x$ is the original text and $x'$ the transformed text, $\operatorname{Lev}_{\mathrm{tok}}$ is Levenshtein distance measured in tokens, $|x|$ is the token length of $x$, and $\operatorname{sim}_{\mathcal E}$ is cosine similarity under an embedding model $\mathcal E$. The script $\mathcal E$ distinguishes the embedding model from the expectation operator $\mathbb E$ and from the substitution indicator $E_t$. Both distances are preserved under the operation family $a$ (substitution, insertion or deletion, copy and paste, paraphrase, summarization, or translation), the language pair, and task-specific semantic checks, although the embedding model, the tokenizer, the normalization rules, and the thresholds must all be adapted under those transformations.

Proposition~\ref{prop:frontier} covers independent substitutions under fixed tokenization. Insertions and deletions change the text, as well as paraphrasing and translating. Those transformations belong to an empirical Pareto surface, capturing the trade-offs between them.

\section{Label-Conditional Conformal Calibration}
\label{sec:conformal}

Let $s(x)$ be the detector score assigned to a document $x$, increasing with evidence for the target watermark. It is the realized value of the random variable $S$, where the change of symbol marks the change of role, from a distribution to a fixed function to be calibrated. Once calibration begins, $s$ may not be retuned. Detector development data, calibration data, and robustness test data must be disjoint. Each example receives a provenance label $Y\in\{0,1\}$ established by controlled generation logs, not inferred by another detector. Calibration is performed within  cells $R=r$ that may combine:

\begin{itemize}
    \item provider, generator or sampler version, detector version, and key epoch,
    \item language, domain, and length bin,
    \item decoding settings and watermark-strength configuration, and
    \item allowed or expected edit family and severity band.
\end{itemize}

Where a cell holds only a few examples, it should be merged with others before calibration begins. Choosing a favorable cell {\em after} the test score has been revealed destroys the guarantee. Hierarchical pooling can improve efficiency, as in \citet{xie2026}, but the exchangeability or weighting assumptions must be recorded.

Define the nonconformity scores
\begin{equation}
A(x,1)=-s(x),
\qquad
A(x,0)=s(x).
\label{eq:nonconf}
\end{equation}
For a candidate label $y$ in cell $r$, let $I_{y,r}$ be the index for $n_{y,r}$ calibration examples with that label. The smoothed finite-sample conformal $p$-value is
\begin{equation}
p_y(x)=
\frac{1+\sum_{i\in I_{y,r}}
\ind\{A(X_i,y)\geq A(x,y)\}}
{n_{y,r}+1}.
\label{eq:pvalue}
\end{equation}

Let $p_1$ count the calibration items whose scores are no larger than $s(x)$, and $p_0$ count those whose scores are no smaller. Randomized tie-breaking may replace the $\geq$ relation where exact rather than conservative validity is wanted.

With separate tolerances $\alpha_0$ and $\alpha_1$, define
\begin{equation}
\Gamma(x)=\{y\in\{0,1\}:p_y(x)>\alpha_y\}.
\label{eq:predset}
\end{equation}
Attributing a watermark to a text that does not carry it and failing to find a mark that is there lead to different consequences. A regulator or user should then be able to price them separately, under two provisos: each cell requires a minimum calibration size, and the score may not be changed after the disputed item has been seen. 

\begin{proposition}
\label{prop:conformal}
Fix a cell $r$ and a label $y$. Suppose the $n_{y,r}$ calibration examples and a future example are exchangeable conditional on $Y=y$, $R=r$, and all model-training information. Then the set in Equation~\eqref{eq:predset} satisfies
\begin{equation}
\Pr\{y\notin\Gamma(X_{\mathrm{new}})\mid
Y_{\mathrm{new}}=y,R_{\mathrm{new}}=r\}
\leq \alpha_y.
\label{eq:coverage}
\end{equation}
\end{proposition}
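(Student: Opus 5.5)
The argument is the standard split-conformal rank argument, applied inside one cell and one label class. First I reduce the event $\{y\notin\Gamma(X_{\mathrm{new}})\}$ to a statement about $p_y$ alone. By Equation~\eqref{eq:predset}, $y\notin\Gamma(X_{\mathrm{new}})$ holds exactly when $p_y(X_{\mathrm{new}})\le\alpha_y$. The other label's $p$-value is irrelevant, so the two tolerances never interact. It therefore suffices to show
\begin{equation*}
\Pr\{p_y(X_{\mathrm{new}})\le\alpha_y\mid Y_{\mathrm{new}}=y,R_{\mathrm{new}}=r\}\le\alpha_y .
\end{equation*}

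Next I set up exchangeable scores. Condition on $Y=y$, $R=r$ and on all training information. Under this conditioning the score function $s$, and hence $A(\cdot,y)$ from Equation~\eqref{eq:nonconf}, is a fixed deterministic map. This is where the paper's requirement matters that $s$ is frozen before calibration and that development and calibration data are disjoint. Write $m=n_{y,r}$, $A_i=A(X_i,y)$ for $i\in I_{y,r}$, and $A_{m+1}=A(X_{\mathrm{new}},y)$. Applying a fixed map coordinatewise preserves exchangeability, so $(A_1,\dots,A_{m+1})$ is exchangeable. The $p$-value in Equation~\eqref{eq:pvalue} can be rewritten as
\begin{equation*}
p_y(X_{\mathrm{new}})=\frac{1}{m+1}\sum_{i=1}^{m+1}\ind\{A_i\ge A_{m+1}\}.
\end{equation*}
The numerator counts the new point itself once through the leading $1$. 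So $(m+1)p_y(X_{\mathrm{new}})$ is the number of points whose nonconformity is at least that of the new point, that is, the rank of $A_{m+1}$ from the top with ties resolved against the test point.

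The main step is to bound this rank. Let $k=\lfloor\alpha_y(m+1)\rfloor$. The event $p_y\le\alpha_y$ says that at most $k$ of the $m+1$ scores are $\ge A_{m+1}$. With ties counted conservatively, this means $A_{m+1}$ is strictly among the $k$ largest values. By exchangeability, each index is equally likely to occupy each position of a uniformly random ordering of the values. Breaking ties by an independent uniform permutation, the top-rank of $A_{m+1}$ is uniform on $\{1,\dots,m+1\}$. Our tie convention can only make the counted rank weakly larger than this randomized rank, and a larger rank makes $p_y$ larger. Hence the probability is at most $k/(m+1)\le\alpha_y$.

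I expect ties to be the main obstacle. Exchangeability gives uniformity only for a randomized rank, so I would argue the domination explicitly: the count with $\ge$ is at least the randomized rank pointwise. Everything holds conditionally on the training information, so integrating that information out yields Equation~\eqref{eq:coverage}. If $\alpha_y(m+1)<1$, then $k=0$ and the bound holds trivially. This is the reason a minimum calibration size per cell is needed for informative sets.
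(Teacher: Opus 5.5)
Your proposal is correct and takes the same route as the paper's proof. You reduce $\{y\notin\Gamma(X_{\mathrm{new}})\}$ to $\{p_y(X_{\mathrm{new}})\le\alpha_y\}$, use conditional exchangeability to make the rank of the test nonconformity score uniform among the $n_{y,r}+1$ scores, and observe that the conservative $\geq$ tie rule only stochastically enlarges $p_y$; the paper compresses these steps, while you spell out the threshold $k=\lfloor\alpha_y(n_{y,r}+1)\rfloor$, the pointwise domination of the randomized rank by the counted rank, and the role of the frozen score function.
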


\begin{proof}
Conditional exchangeability makes the rank of the future nonconformity score among the $n_{y,r}+1$ scores uniform when there are no ties and stochastically dominates the uniform one with the conservative tie rule. Therefore $\Pr\{p_y(X_{\mathrm{new}})\leq\alpha_y\mid Y=y,R=r\}\leq\alpha_y$. The event $y\notin\Gamma(X_{\mathrm{new}})$ is exactly this event.
\end{proof}

This is a finite-sample, class-conditional coverage statement \citep{vovk2005,vovk2013,angelopoulos2023}. Equation~\eqref{eq:coverage} averages over draws of the calibration set. For one \emph{particular} frozen calibration set, the realized error rate is itself random, with a standard deviation of order $\sqrt{\alpha_y(1-\alpha_y)/n_{y,r}}$. 

\begin{table}[t]
\centering
\caption{Permitted interpretation of the conformal prediction set. The target is a specified provider watermark, not human authorship.}
\label{tab:decisions}
\begin{tabularx}{\textwidth}{@{}lYY@{}}
\toprule
$\Gamma(x)$ & Reported result & Controlled error and required interpretation \\
\midrule
$\{1\}$ & Target watermark supported & If $Y=0$ and exchangeability holds in the declared cell, false attribution is bounded by $\alpha_0$. This supports a provider-specific mark. \\
$\{0\}$ & Target watermark not supported & If $Y=1$ and exchangeability holds, false exclusion is bounded by $\alpha_1$. It does not prove human authorship or absence of other AI assistance. \\
$\{0,1\}$ & Inconclusive & Both labels remain compatible with calibration data. No categorical source claim is warranted. \\
$\varnothing$ & Inconclusive; investigate drift & Both labels are atypical. Map to $\{0,1\}$ for decisions, preserve the anomaly flag, and examine version mismatch, attack, or distribution shift. \\
\bottomrule
\end{tabularx}
\end{table}

Table~\ref{tab:decisions} presents the four possible cases. Mapping the empty set to the full set can only enlarge predictions, so it preserves coverage.

The size of cells is a floor for what the levels can mean. The smallest attainable conformal $p$-value is $1/(n_{y,r}+1)$, so a nominal $\alpha_y=.01$ needs at least 99 calibration examples in that cell before exclusion is even possible. 

The two inconclusive cases arise in different regimes. Where the marked and null score distributions are well separated, a genuinely marked item in the lower $\alpha_1$ tail of the marked distribution is usually still far above the null. Both labels are then excluded, and the set is empty. Where separation is weak, as with short text, low entropy, or heavy editing, the same item fits both distributions and the set is full.

An empty set therefore means different things in different cells. In a well-separated cell it is the $\alpha_1$ ``budget'' being spent as expected. In a cell that normally returns full sets, it is an anomaly.

Validity does not survive repeated querying. If many detectors or many documents are tested and a decision maker will act on any positive result, the family-wise and false-discovery rates become uncontrolled. The audit policy must fix a correction or an error bound in advance. Repeatedly testing revised versions of one document raises the same problem, unless the sequence is treated as a family or handled by a sequentially valid method.

Exchangeability is crucial. There are several issues (a model update, a new key epoch, etc.) that can break it. Weighted conformal prediction can restore validity under a specified covariate shift, provided the density ratio is known or can be estimated consistently \citep{tibshirani2019}. More general departures yield only approximate bounds, depending on the assumptions adopted \citep{barber2023}. Every recalibration creates a new detector version and a new calibration hash. 

\section{A Synthetic Demonstration}
\label{sec:demo}

The preceding sections make three claims:
\begin{itemize}
\item evidence decays as $(1-\delta)^{h+1}$,
\item the tolerable substitution rate follows the corrected $\rho^{2}$ frontier,
\item a conditional conformal layer delivers class-conditional error rates, degrading into abstention. 
\end{itemize}

This section checks all three in a small simulator. The value of this exercise is that it makes the framework cheap to falsify. On top of that, it adds a fourth result, specifying what happens when a calibration cell is mis-specified. 

The simulator implements a miniature single-layer tournament watermark. A pseudorandom function with a key maps a packed $(h\text{-token context},\text{token})$ pair to a value in $\{0,1\}$ under a Bernoulli distribution. The generation process draws two candidate tokens from the next-token distribution and emits the one with the larger value in $\{0,1\}$, breaking ties with a keyed coin. Unmarked text is drawn from the same distributions, which gives exactly matched null data.

Each next-token distribution places mass $q^{\ast}$ on one token and spreads $1-q^{\ast}$ over $K-1$ alternatives. The collision probability is then available in closed form as $C_t=(q^{\ast})^2+(1-q^{\ast})^2/(K-1)$, and varying $(q^{\ast},K)$ we capture the entropy of the distribution. Two strata are used, stratum~A, with mean per-token bias $\bar\varepsilon=.095$, and the lower-entropy stratum~B, with $\bar\varepsilon=.047$. Here $\bar\varepsilon$ is the average over positions and documents of the per-token signal $\varepsilon_t$ of Equation~\eqref{eq:collision-bias}, and it plays the role of $\varepsilon$ in Section~3.

The edit channel substitutes each token independently with probability $\delta$, drawing the replacement uniformly from the vocabulary. Detection recomputes every value $Z_t \in \{0,1\}$ and reports the standardized score
\begin{equation}
s=\frac{\sum_t Z_t-n/2}{\sqrt{n/4}},
\label{eq:zstat}
\end{equation}
where $n$ is the number of scorable positions, as in Section~3. A document of $N$ tokens gives $n=N-h$ of them.

The null score has mean $1/2$ per contribution and unit variance after standardization. The marked bias equals the closed-form $\mathbb E[(1-C_t)/4]$ to within $.003$. And a substituted token or context yields a fresh pseudorandom bit, which is exactly the ``null-like corrupted window'' assumption of Proposition~\ref{prop:frontier}. Throughout, $h=4$, matching the seed-window length reported for SynthID-Text.

Equation~\eqref{eq:neff} returns $n_{\mathrm{eff}}$ between $409$ and $1{,}621$ against nominal $n$ of $396$ and $1{,}596$. Single-layer contributions with distinct contexts are independent, something that cannot be assumed for multilayer tournaments.

\begin{figure}[t]
\centering
\includegraphics[width=\textwidth]{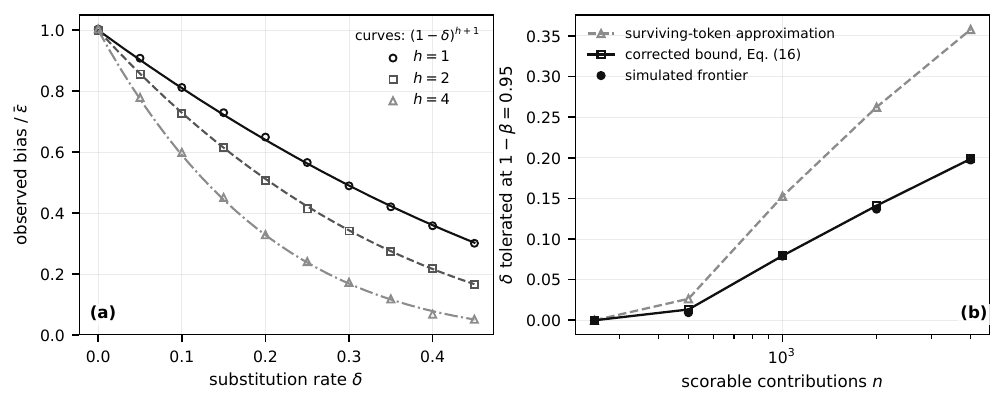}
\caption{Context survival and the corrected frontier. \textbf{a} Observed marked bias after substitution, normalized by the unedited bias $\bar\varepsilon$. Simulation values are the marks on the curves representing $(1-\delta)^{h+1}$. \textbf{b} Substitution rate tolerated at $\alpha=.01$, $1-\beta=.95$ in stratum~A as a function of the number of scorable contributions. The surviving-token approximation allows roughly twice the edit rate the detector can in fact tolerate.}
\label{fig:frontier}
\end{figure}

Figure~\ref{fig:frontier}a plots the observed marked bias after substitution, normalized by the unedited bias, against $\delta$ for $h\in\{1,2,4\}$. Simulation and the analytic curve $(1-\delta)^{h+1}$ differ by at most $.010$ over $\delta\in[0,.45]$, with mean deviations below $.005$. Equations~\eqref{eq:survival} and~\eqref{eq:dilution} therefore describe this channel accurately, including the exponent $h+1$.

Figure~\ref{fig:frontier}b tests the correction itself. For each length, the largest $\delta$ at which the one-sided test keeps power $1-\beta=.95$ at $\alpha=.01$ was estimated by simulation, then compared with two predictions: Equation~\eqref{eq:normal-delta}, and the surviving-position rule that credits the detector with $n\rho$ contributions at bias $\varepsilon$.

The simulated frontier tracks Equation~\eqref{eq:normal-delta} closely: $.079$ against $.079$ at $n=1{,}000$, $.137$ against $.141$ at $n=2{,}000$, and $.197$ against $.199$ at $n=4{,}000$. The surviving-position rule is optimistic by a factor of $1.8$ to $2.0$ throughout. At $n=4{,}000$ in stratum~A it would certify tolerance to a $36$ percent substitution rate where the true limit is $20$ percent. Stratum~B repeats the pattern at roughly four times the length, which is what Equation~\eqref{eq:normal-n} predicts from the ratio of squared biases.

A provider reasoning from surviving tokens would report an edit tolerance about twice the achievable one, and the error grows with the seed-window length. This is a claim a supervisory authority or an independent assessor could falsify using the transformation tests the Code already contemplates.

\begin{figure}[t]
\centering
\includegraphics[width=\textwidth]{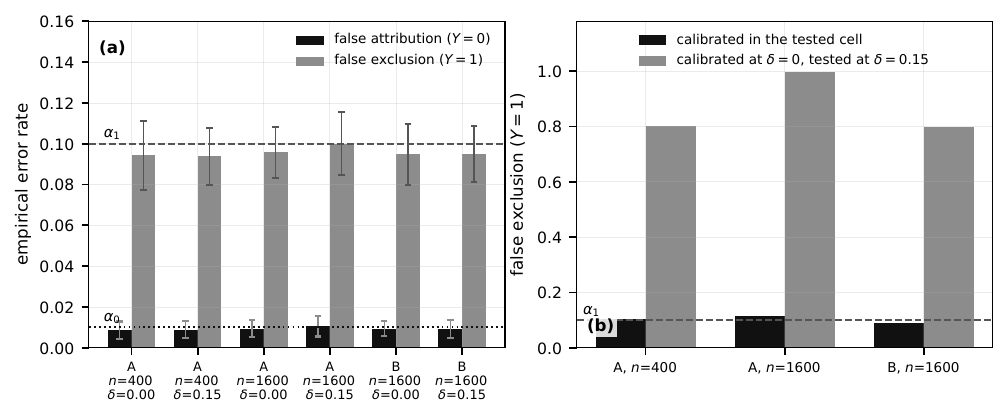}
\caption{Label-conditional conformal calibration with $\alpha_0=.01$ and $\alpha_1=.10$. \textbf{a} Empirical false attribution and false exclusion in six calibration cells, averaged over 60 calibration draws of 500 examples per label; error bars are the standard deviation across draws. \textbf{b} False exclusion when the item is scored against a cell calibrated on unedited text but is itself edited at $\delta=.15$, compared with calibration in the tested cell}
\label{fig:coverage}
\end{figure}

Six cells were calibrated at $\alpha_0=.01$ for false attribution and $\alpha_1=.10$ for false exclusion. The asymmetry is the one recommended by the framework, attributing a provider's watermark to text that does not carry it is the more consequential error. Table~\ref{tab:demo} reports the results. Empirical false attribution falls between $.0086$ and $.0105$, and empirical false exclusion between $.0938$ and $.1000$, against nominal levels of $.01$ and $.10$.

Figure~\ref{fig:coverage}a also shows the dispersion anticipated in Section~\ref{sec:conformal}. Across 60 calibration draws of $500$ examples per label, the realized false-exclusion rate has a standard deviation of $.013$ to $.017$, and the realized false-attribution rate one of $.004$ to $.005$. Both match $\sqrt{\alpha_y(1-\alpha_y)/n_{y,r}}$ to two significant figures. The guarantee is an average over calibration draws, and any single deployed calibration object sits somewhere in that spread. A certificate reporting only the nominal level hides this.

Figure~\ref{fig:coverage}b shows what a mis-specified cell costs. Items edited at $\delta=.15$ were scored against a cell calibrated on unedited text. A provider calibrates once on a clean output, then applies the result to text found in the wild. False exclusion rises from about $.10$ to between $.80$ and $.998$.

The detector reports ``watermark not supported'' for nearly every genuinely marked document. This is what the exchangeability requirement amounts to in practice, and it is the strongest argument for three things: 
\begin{itemize}
\item declaring the edit band as part of the cell, 
\item carrying an in-scope flag on the detector report, and 
\item answering out-of-scope items inconclusive rather than confidently.
\end{itemize}

\begin{figure}[t]
\centering
\includegraphics[width=\textwidth]{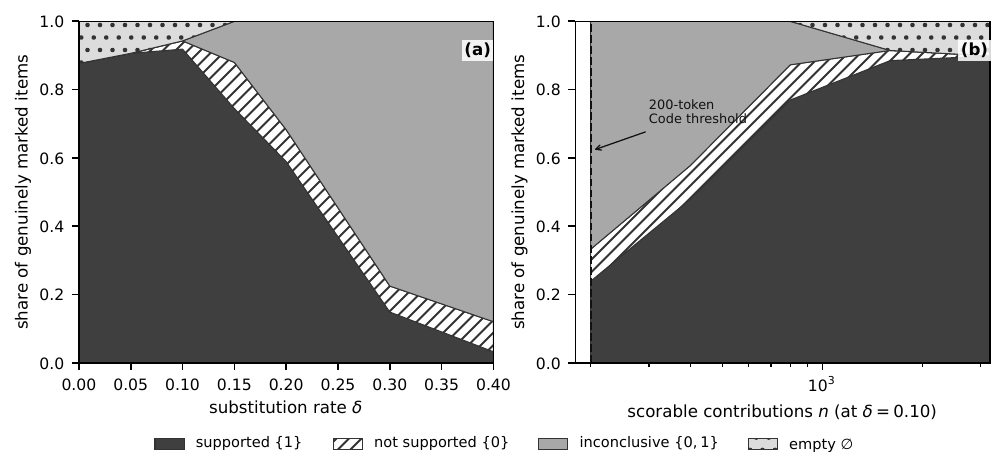}
\caption{Decision composition for genuinely marked items at $\alpha_0=.01$, $\alpha_1=.10$, using a single calibration draw of 800 examples per label. \textbf{a} By substitution rate at $n=1{,}600$. \textbf{b} By length at $\delta=.10$, with the Code of Practice's 200-token threshold marked. Evidence loss is absorbed by the inconclusive categories rather than by confident misclassification}
\label{fig:abstention}
\end{figure}

Figure~\ref{fig:abstention}a follows genuinely marked documents as editing increases at fixed length. The share reported ``watermark supported'' falls from $.877$ at $\delta=0$ to $.033$ at $\delta=.40$. The displaced mass moves into the inconclusive categories, which rise from $.123$ to $.879$. The share reported ``watermark not supported'' stays within the $\alpha_1$ budget throughout, and false attribution on unmarked items never exceeds $.017$.

As the marked and null distributions converge, which is the regime Equation~\eqref{eq:tv-auroc} describes, the system loses the ability to make any claim before it starts making false ones.

The figure also shows the transition between the two inconclusive categories described in Section~\ref{sec:conformal}. At low $\delta$, separation is strong and the $\alpha_1$ budget goes on empty sets. From $\delta=.15$ the marked and null calibration distributions overlap, and the budget goes on full sets instead. Both are reported as inconclusive.

Figure~\ref{fig:abstention}b has the most direct regulatory bearing. At a modest $\delta=.10$, the share of genuinely marked documents reported as supported rises from $.239$ at $n=200$ to $.899$ at $n=3{,}200$. At a 200-token threshold, two-thirds of the genuinely marked documents are inconclusive in this simulation.

The benchmark provides a Code implementation rule for signatories as to when watermarking should be applied. A compliance regime that reads the marking obligation as an implied detection warranty will overestimate what the evidence supports for short texts. Those are exactly the texts most likely to be disputed (single messages, a review, or an answer).

\begin{table}[t]
\centering
\caption{Simulated calibration cells. Levels are $\alpha_0=.01$ for false attribution and $\alpha_1=.10$ for false exclusion; entries are means over 60 draws of 500 calibration examples per label, with the standard deviation across draws in parentheses. Abstention is the combined share of inconclusive outputs across both labels}
\label{tab:demo}
\small
\begin{tabular}{@{}llrrrrr@{}}
\toprule
Stratum & $\bar\varepsilon$ & $n$ & $\delta$ &
False attribution & False exclusion & Abstention \\
\midrule
A & .095 & 400 & .00 & .0086 (.0042) & .0943 (.0169) & .017 \\
A & .095 & 400 & .15 & .0089 (.0043) & .0938 (.0139) & .515 \\
A & .095 & 1{,}600 & .00 & .0094 (.0041) & .0959 (.0125) & .053 \\
A & .095 & 1{,}600 & .15 & .0105 (.0050) & .1000 (.0154) & .049 \\
B & .047 & 1{,}600 & .00 & .0094 (.0037) & .0948 (.0149) & .014 \\
B & .047 & 1{,}600 & .15 & .0093 (.0045) & .0950 (.0138) & .519 \\
\bottomrule
\end{tabular}
\end{table}

Table~\ref{tab:demo} shows that abstention is not monotone in any single variable. Stratum~A at $n=1{,}600$ and $\delta=.15$ abstains on $.049$ of items. The same length and edit band in stratum~B abstains on $.519$. Two cells with identical declared length and identical declared edit severity behave completely differently, because their entropy differs. A certificate that reports length and edit tolerance but not the entropy regime is therefore incomplete.

The simulated next-token distributions are parametric and independent across positions, so they lack the long-range structure of natural language. Real per-token bias is heterogeneous in ways this construction only crudely imitates. The edit channel is analytically tractable, but the demonstration does not extend to insertion, deletion, paraphrase, or translation. Only one tournament layer is simulated, so the dependence that motivates Equation~\eqref{eq:neff} never arises. And because calibration and test data come from one process, exchangeability holds by construction except where Figure~\ref{fig:coverage}b  breaks it.

\section{An Auditable Article~50 Architecture}

The description-length profile and the conformal set do different jobs. The profile characterizes how evidence degrades across transformations. The conformal set converts a frozen score into set-valued decisions with bounded error for a specified population. Our proposed compliance architecture therefore combines both within a lifecycle.

\begin{itemize}
\item \textbf{Proposed predeployment robustness certificate.}
Before deployment, the provider chooses the generator or sampler, the detector, the tokenizer, the key-epoch, the policy, the target populations, the transformations, and the acceptance criteria. Independent held-out data then produce a certificate that includes negative findings and unsupported cells. Notice that this certificate does not constitute an EU conformity certificate, a conformity assessment, or a presumption of compliance. 

\item \textbf{Signed per-query detector report.}
Each detector response identifies the exact system and calibration object used, records both conformal $p$-values and the prediction set, and cryptographically binds the result to a hash of the tested bytes. It reports whether the item lies within a certified cell and implements the Code's call for downloadable, signed detection output.

\item \textbf{Lifecycle monitoring and recalibration.}
Providers monitor score distributions, edit channels, detector abuse, false-attribution reports, and model changes. A material generator, tokenizer, detector, or calibration change creates a new version. This lifecycle is consistent with the testing and maintenance approach recommended for synthetic-content transparency \citep{nist2024}.
\end{itemize}

Table~\ref{tab:certificate} lists the fields a certificate should contain and what each is for. The population row must record the entropy regime as well as the length bin, because Table~\ref{tab:demo} shows two cells with the same length and edit severity behaving entirely differently when entropy differs. The robustness row must record the edit band used for calibration, because Figure~\ref{fig:coverage}b shows that scoring an edited item against an unedited cell inverts the guarantee rather than merely loosening it.

\begin{table}[t]
\centering
\caption{Proposed certificate fields and their compliance function}
\label{tab:certificate}
\small
\begin{tabularx}{\textwidth}{@{}>{\raggedright\arraybackslash}p{0.19\textwidth}YY@{}}
\toprule
Field & Minimum disclosure & Function \\
\midrule
System identity & Provider, model and sampler versions, watermark mode, tokenizer, detector, key epoch or rotation policy & Prevents performance claims from migrating across incompatible versions. \\
Population & Languages, domains, length bins, entropy regime, decoding settings, exclusions, and intended uses & Defines the exchangeability claim and the scope of effectiveness. \\
Null and positive labels & Operational construction of $Y=0$ and $Y=1$, matched prompt protocol, provenance logs & Makes ``reliability'' a reproducible source-specific test. \\
Quality & Blind human preference where material, task scores, diversity, latency; uncertainty & Documents the quality--detectability trade-off  \\
Robustness surface & Operation family, token and semantic distances, calibrated edit band, attack knowledge, score/evidence bits, FPR, TPR, and abstention with intervals & Gives a bounded, multidimensional meaning to robustness. \\
Conformal calibration & Nonconformity function, cells, $n_{0,r}$, $n_{1,r}$, $\alpha_0$, $\alpha_1$, realized dispersion, hashes, and validity date & Makes the decision rule and finite-sample resolution auditable. \\
Interoperability & Authenticated API or standard access method; schemas, rate limits, detector discovery and retirement policy & Allows authorized third parties to reproduce detection without exposing keys. \\
Security and abuse & Key custody, detector-query controls, spoofing and removal tests, incident process & Addresses the possibility that detector access itself enables attacks. \\
Monitoring & Drift metrics, thresholds, review interval, corrective actions, and version history & Converts a one-time benchmark into continuing assurance. \\
\bottomrule
\end{tabularx}
\end{table}

The certificate should expose enough aggregate information for scrutiny without publishing secret keys or an attack-enabling transcript. A competent independent assessor can receive further confidential material.

A machine-readable report should contain at least:

\begin{enumerate}
    \item a cryptographic hash of the exact input bytes and the normalization/tokenization procedure,
    \item provider, generator or sampler, watermark, detector, and calibration-set identifiers,
    \item timestamp, key epoch tested, declared cell, input length, and in-scope/out-of-scope status,
    \item raw detector score, $p_0$, $p_1$, $\alpha_0$, $\alpha_1$, $n_{0,r}$, $n_{1,r}$, and $\Gamma(x)$;
    \item the three-state natural-language conclusion from Table~\ref{tab:decisions};
    \item active drift or integrity warnings and a link to the applicable certificate version, and
    \item the detector service's digital signature and verification information.
\end{enumerate}

The report must never present $1-p_0$, the raw score, or a classifier confidence as ``the probability that AI wrote this.''  Conformal coverage does not allocate a legal burden of proof, which is supplied by the applicable substantive and procedural law and applied by the competent decision-maker.

Table~\ref{tab:article50} translates each of the four attributes named in Article~50(2) into evidence a provider can produce.

\begin{table}[t]
\centering
\caption{Operational interpretation of Article~50(2)'s four technical attributes}
\label{tab:article50}
\begin{tabularx}{\textwidth}{@{}p{0.16\textwidth}YY@{}}
\toprule
Attribute & Proposed evidence & What it does not establish \\
\midrule
Effectiveness & Coverage of in-scope outputs, detectable-length distribution, TPR and abstention at declared $\alpha_0,\alpha_1$ & Universal marking of exempt, truncated, unsupported, or externally generated content \\
Reliability & Label-conditional error control in declared cells, subgroup results; signed reproducible reports & Pointwise certainty for a disputed document or validity after unmeasured shift \\
Robustness & Description-length and performance surfaces across named ordinary and adversarial transformations & Resistance to every adaptive, quality-preserving removal strategy \\
Interoperability & Versioned detector discovery, authenticated access, common schemas, signatures, and retirement notices & Semantic equivalence among different providers' scores or keys \\
\bottomrule
\end{tabularx}
\end{table}

\section{Discussion and Conclusion}

Text watermarking cannot carry universal artificial content detection. Quality-preserving robustness fails under adaptive threat models. Statistical separability vanishes as source distributions converge. And ordinary editing erases evidence faster than a surviving-token count suggests, by a factor of roughly two in the simulation reported here.

None of this makes Article~50 incoherent. The provision itself qualifies its technical attributes by feasibility, content limitations, cost, and the state of the art. The Code adds voluntary implementation commitments concerning layered marking, testing, signed detection results, and continuing improvement, while the Guidelines clarify the scope and application of the statutory obligations.

The response is to generate an auditable object. A description-length robustness profile states how much source-specific evidence survives a named transformation. Collision probability and empirical score distributions replace an unidentified entropy multiplier. Token edit distance and semantic loss stay on separate axes. A label-conditional conformal layer then turns the frozen detector score into a supported, unsupported, or inconclusive result, at separately chosen false-attribution and false-exclusion levels, under stated exchangeability conditions.

Three findings from the demonstration carry straight into practice. First, a robustness claim built on surviving-token reasoning overstates tolerable editing about twofold, and the overstatement grows with the seed's window length. Second, a calibration cell that omits the edit band does not degrade gracefully on edited text. It reports ``not supported'' for nearly every genuinely marked document. Third, assuming a 200-token threshold, editing already leaves most genuinely marked documents inconclusive. A duty to mark is not a warranty of detectability.

This architecture does not defeat the impossibility theorem, and it assigns no posterior probability to an individual text. Its value is that it makes claims falsifiable. Regulators gain a way to read ``robust and reliable as far as technically feasible'' as evidence comparable across providers and over time. Providers gain a way to disclose failure regions without treating every limitation as noncompliance. Watermarking works as an indicator of provenance rather than as an oracle of authorship.

A final caveat is that this design has the potential to create an incentives problem for providers, since lower reliability may justify more restrictive access while improvements in detector performance may reduce a provider’s regulatory discretion. The issue is in part endogenous, since the technical limitation that triggers the restriction is itself influenced by the provider’s choices concerning the development, validation, and deployment of the detector.

\section*{Declarations}

\paragraph{Data availability.}
This article develops an analytic and governance framework and reports no new empirical dataset, human-participant experiment, or proprietary system evaluation. All data underlying Section~\ref{sec:demo} are synthetic and are regenerated deterministically by the accompanying code from the seed recorded in Appendix~C; no separate data deposit is required.

\paragraph{Code availability.}
The simulator, experiment driver, and test suite described in Appendix~C are provided as supplementary material and will be deposited in a public repository with a persistent identifier on acceptance. The code depends only on \texttt{numpy}, \texttt{scipy}, and \texttt{matplotlib}, and reproduces every figure and every number in Section~\ref{sec:demo} in a few minutes on a single core.

\paragraph{Ethics.}
No human participants or personal data were involved. A future empirical validation should archive prompts, generation settings, transformation code, prespecified cells, calibration splits, and report schemas, subject to key-security, privacy, and ethics constraints. Any evaluation using user text or consequential decisions requires the applicable ethics, privacy, and due-process review.

\paragraph{Competing interests and funding.}
\ifanon
Provided on the separate title page in accordance with the journal's double-anonymous submission procedure.
\else
\textbf{Funding.} The authors received no specific funding for this work.
\textbf{Competing interests.} The authors have no relevant financial or non-financial interests to disclose.
\fi

\appendix

\section{Reproducing the Demonstration}
\label{app:code}

The supplementary code has three files. \texttt{wmdemo.py} implements the keyed pseudorandom function, the next-token model, tournament generation, the substitution channel, the detector, the conformal $p$-values of Equation~\eqref{eq:pvalue}, the decision rule of Equation~\eqref{eq:predset}, and the diagnostic of Equation~\eqref{eq:neff}. \texttt{run\_demo.py} runs the three experiments and writes the figures and a machine-readable \texttt{results.json}. \texttt{test\_wmdemo.py} asserts the properties the propositions assume: that the null score is unbiased and standard normal after standardization; that the marked bias equals $\mathbb E[(1-C_t)/4]$; that $\rho=(1-\delta)^{h+1}$ holds for $h\in\{1,3,5\}$; that both conformal $p$-values are super-uniform at the .01, .05, .10, and .20 levels; and that realized class-conditional errors respect $\alpha_0$ and $\alpha_1$.

\begin{verbatim}
    python3 test_wmdemo.py     # property checks, a few seconds
    python3 run_demo.py        # figures + results.json, a few minutes
\end{verbatim}

The master seed is 20260902 and the seed-window length is $h=4$. The planning levels are $\alpha=.01$ and $\beta=.05$; the operating levels are $\alpha_0=.01$ and $\alpha_1=.10$. Stratum~A uses $K\in\{2,4,8,16\}$ and $q^{\ast}\sim\mathrm{Unif}(.55,.97)$, stratum~B uses $K\in\{2,4,8\}$ and $q^{\ast}\sim\mathrm{Unif}(.80,.99)$. Every reported quantity is a deterministic function of these settings.

\end{document}